\documentclass[letterpaper, 10 pt, conference]{ieeeconf}  %

\IEEEoverridecommandlockouts                              %
\overrideIEEEmargins

\pdfminorversion=4

\usepackage{caption}
\captionsetup[figure]{font=small}

\usepackage{amsthm}
\usepackage{amsmath}
\usepackage{amssymb}
\usepackage{mathtools}
\usepackage{hyperref}
\usepackage{cleveref}
\usepackage{bm}
\usepackage{todonotes}
\usepackage{cite}

\makeatletter
\def\namedlabel#1#2{\begingroup
    #2%
    \def\@currentlabel{#2}%
    \phantomsection\label{#1}\endgroup
}
\makeatother

\newtheorem{definition}{Definition}
\newtheorem{theorem}{Theorem}
\newtheorem{lemma}{Lemma}
\newtheorem{corollary}{Corollary}
\newtheorem{remark}{Remark}
\newtheorem{setting}{Setting}
\newtheorem{definition*}{Definition}
\newtheorem{example}{Example}
\newtheorem{proposition}{Proposition}
\newtheorem*{proposition*}{Proposition}
\newtheorem*{theorem*}{Theorem}
\newtheorem*{lemma*}{Lemma}
\newtheorem*{corollary*}{Corollary}
\newtheorem*{remark*}{Remark}
\newtheorem*{setting*}{Setting}
\newtheorem*{example*}{Example}

\newcommand{\Nc}{\mathcal N}

\newcommand{\N}{\mathbb N}
\newcommand{\E}{\mathbb E}
\newcommand{\R}{\mathbb R}

\newcommand{\Var}{\mathrm{Var}}

\newcommand{\Fc}{\mathcal F}

\newcommand{\sign}{\mathrm{sign}}

\title{\LARGE \bf
   On the continuity and smoothness of the value function in reinforcement learning and optimal control
}

\author{Hans Harder$^1$ %
and Sebastian Peitz$^1$%
\thanks{
   $^1$Department of Computer Science, Paderborn University, Paderborn, Germany.
   { \tt\{hans.harder,sebastian.peitz\}@upb.de}.}%
}

\begin{document}

\maketitle
\thispagestyle{empty}
\pagestyle{empty}

\begin{abstract}

   The \emph{value function} plays a crucial role as a measure for the cumulative future reward an agent receives in both reinforcement learning and optimal control. It is therefore of interest to study how similar the values of neighboring states are, i.e., to investigate the continuity of the value function. We do so by providing and verifying upper bounds on the value function's modulus of continuity. Additionally, we show that the value function is always Hölder continuous under relatively weak assumptions on the underlying system and that non-differentiable value functions can be made differentiable by slightly ``disturbing'' the system.

\end{abstract}

\section{Introduction}

In reinforcement learning, an agent is put into some environment, acts in it, accumulates traces of rewards, and tries to find actions that will give it the best possible outcome. This is achieved by learning a policy (or feedback law) that maximizes the cumulative future reward it receives, measured by the value function.

A well-known two-phase approach to this issue is \emph{policy iteration} (see \cite{suttonReinforcementLearningIntroduction1998}): In the \emph{policy evaluation} phase, the agent approximates the value (i.e., cumulative reward) for each state by collecting trajectories following the current policy. %
In the \emph{policy improvement} phase, the agent adapts its policy by acting greedily with respect to the value function. %

We study properties of the value function for a fixed policy, as in the policy evaluation phase. (In optimal control, it is convention to only call the cost functional evaluated for the optimal policy the ``value function''. In our case, every policy corresponds to a value function.)
Essentially, this allows us to discard the policy in our analysis by viewing it as part of the environment. 
This setup is common in the analysis of temporal-difference learning, see, e.g., \cite{tsitsiklisAnalysisTemporaldifferenceLearning1997,suttonFastGradientdescentMethods2009,boyanTechnicalUpdateLeastSquares2002}. 

For illustration purposes, let us introduce the formulation for the case where both environment and policy are deterministic. We are given a set of states $S$ and actions $A$, a model of the environment $\Phi : S \times A \rightarrow S$, a reward function $r : S \times A \rightarrow \R$ and a discount factor $\gamma \in (0,1)$. 
When a policy $\pi : S \rightarrow A$ is fixed, the model reduces to a dynamical system $\Phi_\pi : S \rightarrow S$ defined by $\Phi_\pi(x) = \Phi(x,\pi(x))$, and the reward function is simplified by introducing $r_\pi(x) = r(x,\pi(x))$. The \emph{value function} is
\begin{align}
   \label{eq:value-function-with-policy}
   v_\pi(x) = \sum_{n=0}^\infty \gamma^n r_\pi( \Phi_\pi^n(x)),
\end{align}
where $(\cdot)^n$ denotes the $n$-fold application of $\Phi_\pi$. 
Accordingly, one can define the \emph{state-action value} as 
\begin{align}
   \label{eq:state-action-value-function-with-policy}
   q_\pi(x,a) = r(x,a) + \gamma v_\pi(\Phi(x,a)).
\end{align}
\Cref{eq:value-function-with-policy} is more or less the form that we are interested in, and we will discard the subscript $(\cdot)_\pi$ in the forthcoming investigations by assuming that $\pi$ is fixed. In other words, we will think of $r$ and $\Phi$ as functions from $S$ to $\R$ and $S$ respectively, independent of any action.

Works such as \cite{silverDeterministicPolicyGradient2014} or \cite{lillicrapContinuousControlDeep2016a} approximate \eqref{eq:state-action-value-function-with-policy} using neural networks and rely on taking derivatives with respect to $a$. In practice, this is not a problem since these models are differentiable by definition. However, when it comes to the \emph{true} state-action value function, differentiability of $v_\pi$ is required. But this is an assumption that needs good justification; there are value functions that are \emph{nowhere} differentiable:

\begin{proposition}[see \cite{yamagutiWeierstrassFunctionChaos1983}]
   \label{thr:logistic}
   Let $\Phi(x) = 4x(1-x)$ be the logistic map on $S=[0,1]$, put $r(x) = x$ and let $\gamma \in [\frac 1 2, 1)$. Then $v(x)=\sum_{n=0}^\infty \gamma^n r(\Phi^n(x))$ is nowhere differentiable.
\end{proposition}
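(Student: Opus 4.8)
The plan is to exploit the classical conjugacy between the logistic map and the doubling dynamics to turn $v$ into a Weierstrass cosine series. First I would introduce $\psi(\theta)=\sin^2(\pi\theta/2)$, a smooth increasing homeomorphism of $[0,1]$ with $\psi'(\theta)=\tfrac{\pi}{2}\sin(\pi\theta)$. A one-line trigonometric identity gives the intertwining $\Phi(\psi(\theta))=4\sin^2(\pi\theta/2)\cos^2(\pi\theta/2)=\sin^2(\pi\theta)=\psi(2\theta)$, whence by induction $\Phi^n(\psi(\theta))=\psi(2^n\theta)=\sin^2(2^{n-1}\pi\theta)$ for all $n\ge0$. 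Substituting into the series and using $\sin^2 u=\tfrac12(1-\cos 2u)$ (so that $\sin^2(2^{n-1}\pi\theta)=\tfrac12(1-\cos(2^n\pi\theta))$ for every $n$), I obtain
\[
v(\psi(\theta))=\sum_{n=0}^\infty\gamma^n\sin^2(2^{n-1}\pi\theta)=\frac{1}{2(1-\gamma)}-\frac12\,W(\theta),\qquad W(\theta):=\sum_{n=0}^\infty\gamma^n\cos(2^n\pi\theta).
\]
Thus $v\circ\psi$ is, up to an affine change, exactly the classical Weierstrass function with $a=\gamma$ and $b=2$.

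Next I would invoke Hardy's theorem: the series $\sum_n a^n\cos(b^n\pi\theta)$ with integer $b\ge2$ is continuous but nowhere differentiable as soon as $ab\ge1$. Here $ab=2\gamma$, so the hypothesis $\gamma\in[\tfrac12,1)$ is precisely the borderline condition $ab\ge1$, which explains why it appears. To push non-differentiability of $W$ back onto $v$, I would use that $\psi$ restricts to a diffeomorphism of $(0,1)$ with $\psi'>0$: if $v$ were differentiable at some $x_0=\psi(\theta_0)\in(0,1)$, the chain rule would force $\theta\mapsto v(\psi(\theta))=\tfrac{1}{2(1-\gamma)}-\tfrac12 W(\theta)$, and hence $W$, to be differentiable at $\theta_0$ --- a contradiction. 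This settles every interior point.

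The main obstacle is twofold. The genuinely hard input is Hardy's theorem itself: at the critical threshold $ab\ge1$ the naive term-by-term bound fails because the intermediate frequencies contribute at the same order as the leading one, so I would cite this classical fact (in line with the reference) rather than reprove it. The only gap left in my own argument is the two endpoints $x\in\{0,1\}$, where $\psi'$ vanishes and the chain-rule transfer breaks down; there I would estimate directly. At $x=0$, testing along $x_m=\psi(2^{-m})\asymp 4^{-m}$ and using $\cos(2^{n-m}\pi)=1$ for $n>m$ collapses the series to $W(2^{-m})-W(0)\asymp\gamma^m$ with a definite sign, so $(v(x_m)-v(0))/x_m\asymp(4\gamma)^m\to\infty$ and $v$ has no finite derivative at $0$; the point $x=1$ follows symmetrically using $\Phi(1)=0$.
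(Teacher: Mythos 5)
The paper offers no proof of this proposition at all---it is stated as a citation to Yamaguti--Hata---so there is nothing internal to compare against; your argument is in fact the standard route taken in that reference. Your reconstruction is correct: the conjugacy $\Phi(\psi(\theta))=\psi(2\theta)$ with $\psi(\theta)=\sin^2(\pi\theta/2)$ turns $v\circ\psi$ into an affine image of the Weierstrass series $\sum_n\gamma^n\cos(2^n\pi\theta)$, and Hardy's theorem (which, unlike Weierstrass's original argument, covers the full range $ab\ge 1$, hence exactly $\gamma\ge\tfrac12$) is the right black box to cite for its nowhere-differentiability. You are also right to flag the two points where care is needed: the chain-rule transfer only works where $\psi'\neq 0$, and your direct endpoint estimate at $x=0$ (difference quotients of order $(4\gamma)^m\to\infty$ along $x_m=\psi(2^{-m})$, valid since all terms $\gamma^n(1-\cos(2^{n-m}\pi))$ are nonnegative and the $n=m$ term alone contributes $2\gamma^m$) correctly closes that gap, with $x=1$ handled analogously via $\Phi(1)=0$. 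No issues.
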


\begin{remark}
   One can obtain the system of \Cref{thr:logistic} by defining $r(x,a) = x$ and $\Phi(x,a) = a \cdot 4x(1-x)$ for $A = S = [0,1]$. The policy $\pi(x) = 1$ induces the logistic map. %
\end{remark}

On the other hand, we \emph{can} get differentiability by ``disturbing'' the system. Informally, if we define a new random dynamical system via ``original system + noise'', then the resulting value function is differentiable. We have illustrated this in \Cref{fig:example-logistic} using the logistic map from \Cref{thr:logistic}. %

\begin{figure}
   \centering
   \includegraphics[width=0.31\textwidth]{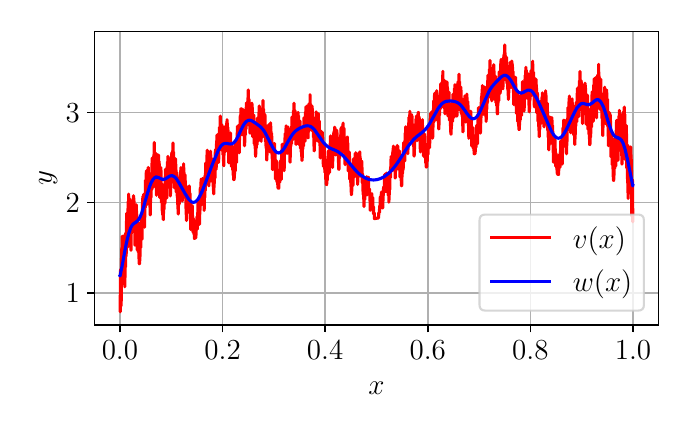}
   \vspace{-10pt}
   \caption{The value function $v$ from \Cref{thr:logistic} for the discount factor $\gamma = 0.8$. The ``smoothed'' version $w$ is the value function that one obtains when disturbing the same system using Gaussian noise with standard deviation $\sigma = 0.01$, cf. \Cref{thr:differentiability}. }
   
   \label{fig:example-logistic}
   \vspace{-16pt}
\end{figure}

In this paper, we will study the general case, i.e., we also analyze nowhere differentiable value functions. 
It turns out that these functions are Hölder continuous under weak assumptions, which is useful since this can be used to bound the variance of $v$ if evaluated at random states:

\begin{proposition}
   \label{thr:varbound}
   Suppose $v : S \rightarrow \R$ for $S \subseteq \R^N$ is Hölder continuous, i.e., there are $A \geq 0$ and $0 < \beta \leq 1$ such that
   $
      |v(x)-v(x')| \leq A || x - x'||_2^\beta
   $
   for all $x,x' \in S$. Let $X$ be an $S$-valued random variable. Then 
   \begin{equation}
      \label{eq:variance-decomposition}
      \Var[v(X)] \leq \frac {A^2} {2^{1-\beta}} ( \sum_{i=1}^N \Var[X_i] )^\beta.
   \end{equation}
\end{proposition}
\begin{proof}
   Any real-valued random variable $Y$ satisfies $\Var[Y] = \frac 1 2 \E[(Y-Y')^2]$, where $Y'$ is an independent copy of $Y$.
   Hence, by Jensen's inequality:
   \begin{align*}
      \Var[v(X)] 
      &= \textstyle\frac 1 2 \E[(v(X)-v(X'))^2] \\
      &\leq \textstyle\frac 1 2 A^2 \E[|| X - X'||^{2\beta}_2] \\
      &\leq \textstyle\frac 1 2 A^2 \E[|| X - X'||^2_2]^\beta \\
      &= \textstyle\frac 1 {2^{1-\beta}} A^2 (\sum_{i=1}^N \frac 1 2 \E[(X_i-X_i')^2])^\beta,
   \end{align*}
   which is just \eqref{eq:variance-decomposition}.
\end{proof}

We study the continuity of $v$ by categorizing it as ``Lipschitz continuous'', ``Hölder continuous'' or as a mix thereof, depending on the relationship between the Lipschitz constant of the system and the discount factor. Hölder continuity of $v$ is known under weak assumptions for the deterministic and continuous-time setting, see \cite{bardiOptimalControlViscosity1997}, and it is known for the discrete-time setting under similar assumptions as we do, see \cite{bernsteinAdaptiveresolutionReinforcementLearning2010}. We improve these results by giving a closed-form solution to an upper bound of $v$'s modulus of continuity and show that it cannot be improved for linear reward functions. Additionally, we consider both random and deterministic systems, as well as continuous-time and discrete-time systems. Finally, we show how ``disturbed'' systems can yield differentiable value functions.
Besides \cite{bardiOptimalControlViscosity1997} and \cite{bernsteinAdaptiveresolutionReinforcementLearning2010}, there have been investigations under varying assumptions by \cite{struloviciSmoothnessValueFunctions2015a,rincon-zapateroDifferentiabilityValueFunction2009,itoContinuityValueFunction2020a,bascoLipschitzContinuityValue2019} and \cite{huangUniquenessConstrainedViscosity2007}.
\Cref{thr:logistic} is a result of \cite{yamagutiWeierstrassFunctionChaos1983}, who studied the equation in a seemingly unrelated setting.

\subsubsection*{Organization}

In \Cref{sec:continuity}, we derive continuity bounds for the value function in both continuous- and discrete-time settings. We show how these bounds improve on and entail known results about the Lipschitz resp. Hölder continuity of the value function. The derived upper bound from \Cref{sec:continuity} is verified experimentally and shown to be sharp in \Cref{sec:sharpness_and_example}. Finally, we show how to obtain differentiability as soon as we add a little bit of Gaussian noise in \Cref{sec:differentiability}.

\section{Definitions}

We fix a probability triple $(\Omega, \Fc, P)$ and a metric space $(S,d)$. We call $S$ the \emph{state space} and suppose that $d$ is bounded; $\sup_{x,x'\in S} d(x,x') \leq D$. For technical reasons, we also have to assume that $S$ is part of a measurable space $(S,\Sigma)$ so that measurable functions from $S$ to $\R$ can be composed or integrated. 
Forthcoming, we will frequently use the \emph{modulus (of continuity)} of a uniformly continuous function $f : S \rightarrow \R$. This is the function defined on $\R_{\geq 0}$ by
\begin{align*}
   W_f(d_0) = \sup_{x,x'\in S} \{ |f(x)-f(x')| : d(x,x') \leq d_0 \}.
\end{align*}
Intuitively, $W_f$ is the best function that one can use when bounding $|f(x)-f(x')|$ in terms of $d(x,x')$.

Next, we define ``random systems''. Such systems model random transitions between states over given periods. We also introduce the notion of \emph{Lipschitz continuity in Expectation}, or LE-continuity, which is used to bound the divergence between nearby states:

\begin{definition}
   \label{def:simplified-stochastic-flow}
   Let $T \in \{ \R_{\geq 0}, \N_0\}$ be the time domain. A measurable function $\Phi : T \times S \times \Omega \rightarrow S$, written $\Phi_t(x) = \Phi_t(x,\cdot) = \Phi(t,x,\cdot)$, is called a random system (so that $\Phi_t(x)$ is a random variable). If $T = \R_{\geq 0}$ ($T = \N_0$), we call $\Phi$ a continuous-time (discrete-time) system.
   We say that $\Phi$ is LE-continuous with constant $L \geq 0$ when
   \begin{align*}
      \E[ d(\Phi_t(x),\Phi_t(x')) ] \leq L^t d(x,x') \quad \forall t \in T,\, x,x' \in X. 
   \end{align*}
\end{definition}

This definition is very general and not strong enough to obtain the well-known recursive representation of the value function. For this one needs stochastic flows \cite{dorogovtsevAnalysisStochasticFlows2014, dorogovtsevMeasurevaluedProcessesStochastic2023}, random dynamical systems \cite{arnoldRandomDynamicalSystems1995} or Markov chains, but the latter does not allow for a simple characterization of LE-continuity (see \cite{dorogovtsevMeasurevaluedProcessesStochastic2023}). However, the required properties from \Cref{def:simplified-stochastic-flow} already suffice to get all desired results since we study value functions in their infinite-sum (or infinite-integral) form, which does not require the Markov property.

\begin{example}
   Let $\xi_0,\xi_1,\dots : \Omega \rightarrow \R^n$ be random vectors and suppose $L \geq 0$ and $\varphi: S \times \R^n \rightarrow S$ satisfy
   \begin{align*}
      d(\varphi(x,y), \varphi(x',y)) \leq L d(x,x'), \quad \forall x,x' \in S, y \in \R^n.
   \end{align*}
   Then the system defined by $\Phi_{n+1}(x) = \varphi(\Phi_n(x),\xi_n)$ and $\Phi_0(x) = x$ 
   is LE-continuous with constant $L$.
\end{example}

\begin{remark}
   If the discrete metric $d(x,x')=1_{x\neq x'}$ is used, e.g., if $S$ is countable, then LE-continuity is trivially satisfied for $L= 1$. LE-continuity is therefore only necessary when $S$ is uncountable, e.g. because it is an open or closed subset of $\R^N$. Intuitively, it requires that trajectories cannot diverge too fast if the distance of the initial states is small.
\end{remark}

\begin{remark}
   The at worst exponential divergence of trajectories is a common assumption, see, e.g., \cite{bardiOptimalControlViscosity1997}. We alter it slightly by introducing the expectation. This does not make a difference for deterministic systems.
\end{remark}

If a random system is equipped with a reward function and a discount factor, one obtains a corresponding value function. We discard the policy and the set of actions:

\begin{setting}
   \label{def:rl-setting-lipschitz}
   Let $r : S \rightarrow \R$ be a reward function for which there is a concave and monotonically increasing $R : \R \rightarrow \R$ with $R(0)=0$ such that $|r(x)-r(x')| \leq R(d(x,x'))$ for all $x,x' \in S$. Moreover, let $\Phi$ be a random system that is LE-continuous with constant $L > 0$. Let $\gamma \in (0,1)$ be a discount factor. Define the value function as
   \begin{align*}
      v(x) = \begin{cases}
         \int_0^\infty \gamma^t \E[r(\Phi_t(x))] dt & T = \R_{\geq 0}, \\
         \sum_{n=0}^\infty \gamma^n \E[r(\Phi_n(x))] & T = \N_0.
      \end{cases}
   \end{align*}
\end{setting}

\section{Continuity of the value function} \label{sec:continuity}

Assuming that \Cref{def:rl-setting-lipschitz} holds, the overall goal of this section is to bound the difference $|v(x)-v(x')|$ by a quantity $H(d(x,x'))$, where $H$ is a function that is to be derived. The derived $H$ depends on multiple factors: the Lipschitz constant $L$ and the bound $R$, the discount factor $\gamma$, the maximal distance between two states $D$, and finally the chosen time domain. In \Cref{thr:hoelder-integrals}, we derive $H$ for the continuous case since the discrete case can be reduced to it. We assume $L>1$ since $v$ is trivially Lipschitz continuous otherwise (see \Cref{sec:hoelder-and-lipschitz} for the case $L\leq 1<1/\gamma$).

\begin{figure}
   \centering
   \includegraphics[width=0.3\textwidth]{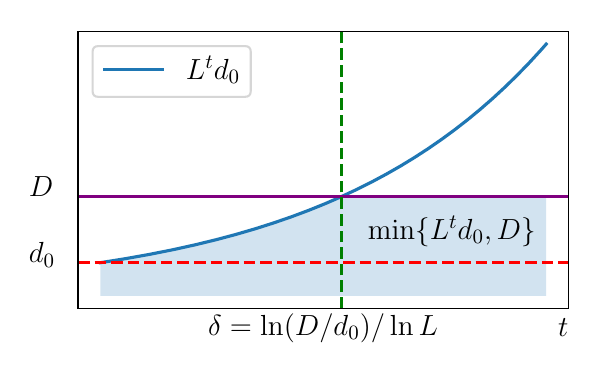}
   \vspace{-10pt}
   \caption{Visual depiction of the idea in \Cref{thr:hoelder-integrals}.}
   \label{fig:proof}
   \vspace{-15pt}
\end{figure}

\begin{theorem}
   \label{thr:hoelder-integrals}
   Suppose \Cref{def:rl-setting-lipschitz} holds with $T = \R_{\geq 0}$ and $L > 1$. Define for $d_0 > 0$:
   \begin{align*}
      K(d_0) = 
      \begin{dcases}
         \frac{\ln L}{\ln (\gamma L)} 
            D^{1-\beta} d_0^\beta + \frac{\ln \gamma}{\ln (\gamma L)} d_0 & L \neq 1/\gamma, \\
         ( \ln(D/d_0) + 1) d_0 & L = 1/\gamma,
      \end{dcases}
   \end{align*}
   with $K(0) := 0$ and $\beta = \ln (1/\gamma) / \ln L$. Setting 
   \begin{align}
      H(d_0) = \frac{R(K(d_0))}{\ln(1/\gamma)},
   \end{align}
   we have
   \begin{align}
      |v(x)-v(x')| \leq 
         H(d(x,x')),
      \quad \forall x,x' \in S.
      \label{eq:hoelder-continuity}
   \end{align}
\end{theorem}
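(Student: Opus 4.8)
The plan is to work directly from the integral definition of $v$ and apply Jensen's inequality twice. Writing $d_0 = d(x,x')$, I would first bound
\begin{align*}
   |v(x)-v(x')|
   &\leq \int_0^\infty \gamma^t \,\E\big[\,|r(\Phi_t(x))-r(\Phi_t(x'))|\,\big]\,dt \\
   &\leq \int_0^\infty \gamma^t \,\E\big[\,R(d(\Phi_t(x),\Phi_t(x')))\,\big]\,dt,
\end{align*}
using $|\int|\leq\int|\cdot|$, then $|\E[\cdot]|\leq\E[|\cdot|]$, and finally the reward bound from \Cref{def:rl-setting-lipschitz}.

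The key observation is that the trajectory distance is controlled in two ways: LE-continuity gives $\E[d(\Phi_t(x),\Phi_t(x'))] \leq L^t d_0$, while the global bound gives $d\leq D$ pointwise, hence $\E[d(\Phi_t(x),\Phi_t(x'))] \leq \min(L^t d_0, D)$. Since $R$ is concave and monotonically increasing, Jensen over $\Omega$ followed by monotonicity yields $\E[R(d(\Phi_t(x),\Phi_t(x')))] \leq R(\min(L^t d_0, D))$ for each $t$. The cap at $D$ is essential: since $L>1$, the uncapped integrand $\gamma^t R(L^t d_0)$ need not even be integrable once $\gamma L \geq 1$. Next I would normalize the discount weight. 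As $\int_0^\infty \ln(1/\gamma)\,\gamma^t\,dt = 1$, the measure $\mu(dt)=\ln(1/\gamma)\,\gamma^t\,dt$ is a probability measure on $\R_{\geq 0}$, so a second application of Jensen to the concave $R$ gives
\begin{align*}
   \int_0^\infty \gamma^t R(\min(L^t d_0, D))\,dt
   = \tfrac{1}{\ln(1/\gamma)} \E_{t\sim\mu}\!\big[R(\min(L^t d_0,D))\big]
   \leq \tfrac{1}{\ln(1/\gamma)} R\!\left(\E_{t\sim\mu}[\min(L^t d_0,D)]\right).
\end{align*}
This has exactly the shape $H(d_0)=R(K(d_0))/\ln(1/\gamma)$, provided $K(d_0) = \ln(1/\gamma)\int_0^\infty \gamma^t \min(L^t d_0, D)\,dt$ matches the stated closed form.

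The remaining task is the explicit evaluation of this integral, which I would split at the crossover time $t^* = \ln(D/d_0)/\ln L$ where $L^{t}d_0 = D$ (this is the picture in \Cref{fig:proof}). On $[0,t^*]$ the integrand is $d_0(\gamma L)^t$ and on $[t^*,\infty)$ it is $D\gamma^t$, both elementary geometric integrals. Using $L^{t^*}=D/d_0$ and $\gamma^{t^*}=(d_0/D)^\beta$ with $\beta=\ln(1/\gamma)/\ln L$, then collecting the two $D^{1-\beta}d_0^\beta$ contributions and simplifying $\frac{\ln(1/\gamma)}{\ln(\gamma L)}+1 = \frac{\ln L}{\ln(\gamma L)}$, produces the $L\neq 1/\gamma$ branch.

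The main obstacle, beyond keeping both Jensen applications pointed the right way (concavity gives $\E[R]\leq R(\E)$ each time), is the degenerate case $L=1/\gamma$, i.e. $\gamma L = 1$. There $(\gamma L)^t\equiv 1$ and $\beta=1$, so the first integral collapses to $d_0\,t^* = d_0\ln(D/d_0)/\ln L$; adding the second integral $d_0/\ln L$ and multiplying by $\ln(1/\gamma)=\ln L$ recovers the logarithmic branch $K(d_0)=(\ln(D/d_0)+1)d_0$. This must be treated separately, since the generic expression $((\gamma L)^{t^*}-1)/\ln(\gamma L)$ is a $0/0$ indeterminacy at $\gamma L = 1$. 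I would close by noting $K(0):=0$, so that \eqref{eq:hoelder-continuity} holds trivially at $d_0=0$, and checking that $K$ is well-defined and nonnegative on $(0,D]$.
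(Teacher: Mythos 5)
Your proposal is correct and follows essentially the same route as the paper's proof: two applications of Jensen's inequality (over $\Omega$ and over the normalized discount measure $\ln(1/\gamma)\gamma^t\,dt$), followed by splitting the integral of $\min(L^t d_0, D)$ at the crossover time $\ln(D/d_0)/\ln L$ and treating $\gamma L = 1$ separately. The only cosmetic difference is that the paper introduces a free splitting parameter $\delta$ and then optimizes it, whereas you use the pointwise bound $\min(L^t d_0, D)$ directly, which yields the same optimal split.
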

\begin{proof}
   Take $x,x' \in S$ and define $d_0 := d(x,x')$. For $d_0 = 0$, use that $d$ is a metric to obtain $x = x'$ and thus $|v(x)-v(x')|=0$. We have $K(0) = 0$ and $R(0) = 0$ by definition, hence \eqref{eq:hoelder-continuity} applies. Forthcoming, suppose $d_0 > 0$.
   Using the concavity of $R$ with Jensen's inequality, we get
   \begin{align}
      \label{eq:value-bound-pf-1}
      |v(x)-v(x')| \leq \int_0^\infty \gamma^t R(\E[d(\Phi_t(x), \Phi_t(x'))]) dt.
   \end{align}
   We introduce the factor $\ln(1/\gamma) / \ln(1/\gamma)$ and notice that $\int_0^\infty \ln(1/\gamma) \gamma^t dt =1$. Thus, we may apply Jensen's inequality a second time to bound $|v(x)-v(x')|$ by:
   \begin{align}
      \label{eq:integral_bound_raw}
      \frac{R\big(\ln(1/\gamma)\int_0^\infty \gamma^t \E[d(\Phi_t(x), \Phi_t(x'))] dt\big)}{\ln(1/\gamma)}.
   \end{align}
   Introducing
   \begin{align}
      \label{eq:K_delta}
      K_\delta(d_0) = \ln(1/\gamma) \big( \int_0^\delta (\gamma L)^t d_0 dt + \int_\delta^\infty \gamma^t D dt \big),
   \end{align}
   we find that for every $\delta \geq 0$:
   \begin{align}
      \label{eq:K_delta_bound}
      \ln(1/\gamma) \int_0^\infty \gamma^t \E[d(\Phi_t(x),\Phi_t(x'))] dt \leq K_\delta(d_0)
   \end{align}
   by bounding $\E[d(\Phi_t(x),\Phi_t(x'))]$ by $L^t d_0$ in the interval from $0$ to $\delta$ and by $D$ in the interval from $\delta$ to $\infty$.

   Since \eqref{eq:K_delta} is a bound for arbitrary $\delta$, we can find a $\delta$ that is optimal, i.e., minimizes $K_\delta(d_0)$: The term $L^t d_0$, as a function of $t$, is strongly monotonically increasing if $L > 1$, and thus it must eventually exceed $D$. It is thus advantageous to bound $\E[d(\Phi_t(x),\Phi_t(x'))]$ by $D$ after this ``crossing'' occurs and better to bound it by $L^t d_0$ before, and hence the optimal choice for $\delta$ is
   $$
      L^\delta d_0  = D \implies \delta = \ln( D / d_0) / \ln L.
   $$
   \Cref{fig:proof} shows the intuitive idea of this choice.
   The quantity $\delta$ must be finite and non-negative since $D > d_0 > 0$ and $L > 1$.
   Indeed, for this particular choice,
   \begin{align*}
      L^\delta = e^{\ln (L) \delta} = D/d_0
      \quad \text{and}\quad
      \gamma^\delta = e^{\ln (\gamma) \delta} = D^{-\beta} d_0^\beta.
   \end{align*}
   Solving the integral for the case $L \neq 1/\gamma$:
   \begin{align*}
      K_\delta(d_0) 
      &= \ln(1/\gamma)\frac{d_0((\gamma L)^\delta-1)}{\ln(\gamma L)} +  D\gamma^\delta \\
      &= \ln(1/\gamma)\frac{d_0( D^{1-\beta} d_0^{\beta-1}-1)}{\ln(\gamma L)} + D^{1-\beta} d_0^\beta \\
      &= D^{1-\beta} d_0^\beta \big( 1 - \frac{\ln \gamma}{\ln (\gamma L)}\big) + \frac{\ln \gamma}{\ln(\gamma L)} d_0 \\
      &= \frac{\ln L}{\ln (\gamma L)} D^{1-\beta} d_0^\beta + \frac{\ln \gamma}{\ln(\gamma L)} d_0 = K(d_0),
   \end{align*}
   where we have used $1-\ln \gamma/\ln(\gamma L) = \ln L / \ln(\gamma L)$.
   In the case $L = 1/\gamma$, the integral becomes:
   \begin{align*}
      K_\delta(d_0) 
      &= \ln(1/\gamma) d_0 \delta + D \gamma^\delta  \\
      &= ( \ln(D/d_0) + 1) d_0 = K(d_0),
   \end{align*}
   using $\beta = 1$. Substitute $K_\delta(d_0) = K(d_0)$ in \eqref{eq:K_delta_bound} and apply monotonicity of $R$ in \eqref{eq:integral_bound_raw} to obtain \eqref{eq:hoelder-continuity}.
\end{proof}

\begin{theorem}
   \label{thr:hoelder-sums}
   Suppose \Cref{def:rl-setting-lipschitz} holds with $T = \N_0$ and $L > 1$, and let $H$ be as in \Cref{thr:hoelder-integrals}. Then
   \begin{align*}
      |v(x)-v(x')| \leq \frac{\ln(1/\gamma)}{1-\gamma} H(d(x,x')), \quad \forall x,x' \in X.
   \end{align*}
\end{theorem}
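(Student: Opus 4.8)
The plan is to reduce the discrete-time case to the already-established continuous-time result in \Cref{thr:hoelder-integrals}, exactly as the text anticipates. The idea is to ``inflate'' the discrete system into a continuous-time one by holding it constant on unit intervals: define $\tilde\Phi_t(x) = \Phi_{\lfloor t\rfloor}(x)$ for $t \in \R_{\geq 0}$, where $\lfloor \cdot \rfloor$ is the floor function. Since $\lfloor \cdot \rfloor$ is measurable, $\tilde\Phi$ is a bona fide continuous-time random system, and I can study its value function $\tilde v$ through the theorem we already have.

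First I would check that $\tilde\Phi$ inherits LE-continuity with the same constant $L$. For any $t \geq 0$, LE-continuity of $\Phi$ gives $\E[d(\tilde\Phi_t(x),\tilde\Phi_t(x'))] = \E[d(\Phi_{\lfloor t\rfloor}(x),\Phi_{\lfloor t\rfloor}(x'))] \leq L^{\lfloor t\rfloor} d(x,x')$, and because $L > 1$ and $\lfloor t\rfloor \leq t$ this is at most $L^t d(x,x')$. Together with the unchanged reward bound $R$ and discount factor $\gamma$, this shows that \Cref{def:rl-setting-lipschitz} holds for $\tilde\Phi$ with $T = \R_{\geq 0}$, so \Cref{thr:hoelder-integrals} applies and yields $|\tilde v(x)-\tilde v(x')| \leq H(d(x,x'))$.

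Next I would relate $\tilde v$ to $v$. Because $\E[r(\tilde\Phi_t(x))] = \E[r(\Phi_n(x))]$ is constant on each interval $[n,n+1)$, and $\int_n^{n+1}\gamma^t\,dt = \gamma^n(1-\gamma)/\ln(1/\gamma)$, the continuous-time integral collapses into the discrete-time sum:
\[
   \tilde v(x) = \int_0^\infty \gamma^t \E[r(\tilde\Phi_t(x))]\,dt = \frac{1-\gamma}{\ln(1/\gamma)}\sum_{n=0}^\infty \gamma^n \E[r(\Phi_n(x))] = \frac{1-\gamma}{\ln(1/\gamma)}\,v(x).
\]
Substituting this into the continuous-time bound and dividing by the positive constant $(1-\gamma)/\ln(1/\gamma)$ then gives the claimed inequality $|v(x)-v(x')| \leq \frac{\ln(1/\gamma)}{1-\gamma} H(d(x,x'))$.

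The computations are routine; the only points requiring care are the two places where $L > 1$ enters --- it is precisely this hypothesis that makes $L^{\lfloor t\rfloor} \leq L^t$ and hence lets LE-continuity survive the inflation --- and the bookkeeping of the normalization constants $1-\gamma$ versus $\ln(1/\gamma)$, whose ratio is exactly the factor in the statement. An entirely equivalent route, avoiding the auxiliary system, is to mimic the two Jensen steps of \Cref{thr:hoelder-integrals} directly on the sum: write $\gamma^n = \frac{\ln(1/\gamma)}{1-\gamma}\int_n^{n+1}\gamma^t\,dt$ to convert the sum into $\frac{\ln(1/\gamma)}{1-\gamma}\int_0^\infty \gamma^t R(\E[d(\Phi_{\lfloor t\rfloor}(x),\Phi_{\lfloor t\rfloor}(x'))])\,dt$, bound the step function pointwise by $R(\min\{L^t d_0, D\})$ using monotonicity of $R$, and then apply the second Jensen step to recover $K(d_0)$ and the same final constant. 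I expect the embedding argument to be the cleaner of the two to present.
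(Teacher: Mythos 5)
Your proposal is correct and takes essentially the same route as the paper: the paper also extends $\Phi$ to continuous time via $\Phi^c_t(x) := \Phi_{\lfloor t\rfloor}(x)$, verifies LE-continuity with the same constant using $L^{\lfloor t\rfloor}\leq L^t$ for $L>1$, shows $v^c = \frac{1-\gamma}{\ln(1/\gamma)}v$ by integrating $\gamma^t$ over unit intervals, and then invokes \Cref{thr:hoelder-integrals}. The bookkeeping of the normalization constant matches exactly, so no further changes are needed.
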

\begin{proof}
   We reduce the problem to a continuous setting.
   Extend $\Phi$ from $\N_0$ to $\R_{\geq 0}$ by defining
   $
      \Phi_{t}^c(x) := \Phi_{\lfloor t \rfloor}(x)
   $
   for $t \in \R_{\geq 0}$, and suppose that the value function for $\Phi^c$ is given by $v^c$. Now, for any $n \in \N_0$, 
   \begin{align*}
      \int_n^{n+1} \gamma^t \E[r(\Phi_t^c(x))] dt 
      &= \E[r(\Phi_n(x))] \int_n^{n+1} \gamma^t dt \\
      &= \gamma^n \E[r(\Phi_n(x))] \frac{1-\gamma}{\ln(1/\gamma)}.
   \end{align*}
   Therefore,
   \begin{align}
      \label{eq:pf-hoelder-cont-discrete}
      v^c(x) 
      = \sum_{n=0}^\infty \int_n^{n+1} \gamma^t \E[r(\Phi_t^c(x))] dt 
      = \frac{1-\gamma}{\ln(1/\gamma)} v(x).
   \end{align}
   Moreover, $\Phi^c$ has the same LE-constant as $\Phi$: Since $L > 1$,
   \begin{align*}
      \E[d(\Phi_t^c(x),\Phi_t^c(x'))]
      &= \E[d(\Phi_{\lfloor t \rfloor}(x),\Phi_{\lfloor t \rfloor}(x'))] \\
      &\leq L^{\lfloor t \rfloor} d(x,x')
      \leq L^t d(x,x').
   \end{align*}
   Then use \eqref{eq:pf-hoelder-cont-discrete} and \Cref{thr:hoelder-integrals} to bound $|v(x)-v(x')|$ by:
   \begin{align*}
      \frac{\ln(1/\gamma)}{1-\gamma}|v^c(x) - v^c(x') |
      \leq \frac{\ln(1/\gamma)}{1-\gamma} H(d(x,x')).
   \end{align*}
\end{proof}

\subsection{Hölder and Lipschitz continuity} \label{sec:hoelder-and-lipschitz}

The results from \cite{bardiOptimalControlViscosity1997} and \cite{bernsteinAdaptiveresolutionReinforcementLearning2010} are entailed by \Cref{thr:hoelder-integrals,thr:hoelder-sums}, but we can also extend them into the stochastic setting: Under the assumptions of \Cref{def:rl-setting-lipschitz} and if the reward function is Lipschitz continuous, i.e., $R(d_0) = C d_0$ for a $C \geq 0$, then there exist $\beta \in (0,1]$ and $A \geq 0$ such that: %
\begin{align}
   \label{eq:continuity-value-function-hoelder}
   |v(x)-v(x')|\leq A d(x,x')^\beta, \quad \forall x,x' \in S.
\end{align}
In fact, if
\begin{itemize}
\setlength{\itemindent}{40pt}
   \item[$L < 1/\gamma$,] then $\beta = 1$ (\Cref{thr:continuity-value-function-case-less}),
   \item[$L = 1/\gamma$,] then any $\beta \in (0,1)$ is allowed (\Cref{thr:continuity-value-function-case-eq}),
   \item[$L > 1/\gamma$,] then $\beta = \ln(1/\gamma) / \ln L$ (\Cref{thr:continuity-value-function-case-greater}).
\end{itemize}
The first case shows Lipschitz continuity, the second case Hölder continuity for any exponent in $(0,1)$, and the last case Hölder continuity for the exponent $\ln(1/\gamma)/\ln L$.
The results follow from \Cref{thr:hoelder-integrals,thr:hoelder-sums}, except for the straight-forward case $L < 1/\gamma$. For $L = 1/\gamma$, one can take $\beta$ as close to $1$ as one desires, but the corresponding $A$ becomes arbitrarily large. For the proofs, we refer to the appendix.

\section{Sharpness and experiments}\label{sec:sharpness_and_example}

While \Cref{thr:hoelder-integrals,thr:hoelder-sums} provide upper bounds on the modulus of $v$, it is still unclear how good the bounds are. Since they only depend on $D$, $L$, $\gamma$ and $R$, our objective is now to construct a random system that is LE-continuous with constant $L$ and whose corresponding value function has the largest possible modulus. The difference between the resulting modulus and the proposed bounds then indicates how much the latter could be decreased.
We start by looking at the deterministic map $x \mapsto x^L$ defined on $[0,1]$. Afterward, we show that one can construct a random system whose value function has a modulus that exactly matches the one proposed in \Cref{thr:hoelder-integrals}.

\subsection{An example for a steep value function} \label{sec:example}

Take an arbitrary $L > 1$ and define a dynamical system $\Phi : [0,1] \rightarrow [0,1]$ via
$
   \Phi(x) = x^L.
$
For $n \in \N_0$, set
$$
   \Phi(n,x) := \Phi_n(x) := \Phi^n(x) = x^{(L^n)}.
$$ 
The system $\Phi$ has a stable fixed point at $x = 0$ and an unstable one at $x = 1$. %
We also have
$$
   \partial_x \Phi_n(x) = L^n x^{L^n-1} \leq L^n,
$$
which shows that $\Phi_n$ is Lipschitz continuous with constant $L^n$ for the Euclidean metric $d(x,x') = |x-x'|$. If interpreted as a random system, i.e., when $\Phi_n(x,\omega) := \Phi_n(x)$ for $\omega \in \Omega$, it must be LE-continuous with constant $L$.
For $r(x) = x$ and any discount factor $\gamma \in (0,1)$, the value function is
\begin{align*}
   v(x) = \sum_{n=0}^\infty \gamma^n x^{(L^n)}.
\end{align*}
It is maximal for $x=1$, where it takes the value $1/(1-\gamma)$, and minimal for $x = 0$, where it is $0$. All other $x \in (0,1)$ quickly approach $0$ as $n$ increases. %
One can see the value functions for varying discount factors on the left in \Cref{fig:example}, depicting a steep increase as $x$ approaches $1$ for the case $L \geq 1/\gamma$. If that happens, $v$'s derivative is unbounded at $x=1$.
The modulus takes the form
$
   W_v(d_0) = v(1) - v(1-d_0),
$
see \Cref{thr:moc-example} in the appendix for details.
As observable on the right of \Cref{fig:example}, the bounds provided by \Cref{thr:hoelder-sums} are close to $W_v$ for varying values of $\gamma$. 
Bounds that only assume knowledge about $L$ and $\gamma$ can hence not be much better than \Cref{thr:hoelder-sums} -- at least for linear reward functions. %

\begin{figure}
   \centering
   \includegraphics[width=0.23\textwidth]{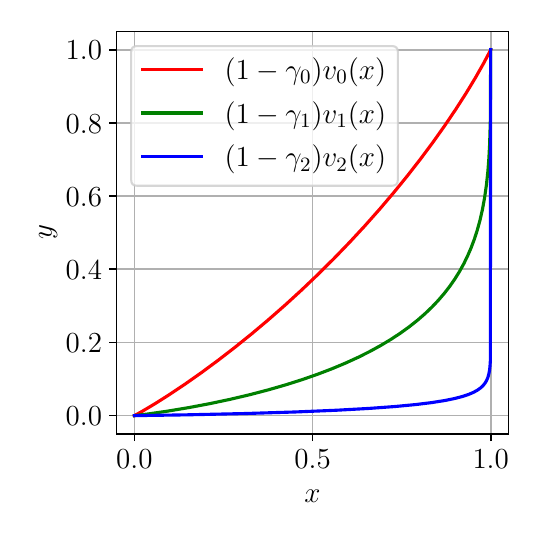}
   \includegraphics[width=0.23\textwidth]{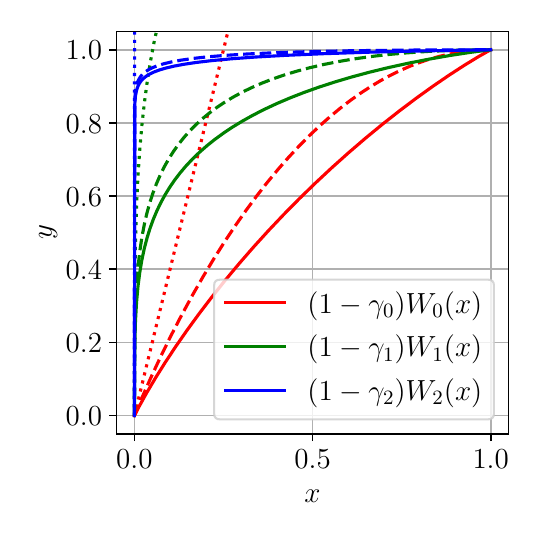}
   \caption{Left: The value functions corresponding to the example in \Cref{sec:sharpness_and_example} for $L = 1.5$ and discount factors $(\gamma_0, \gamma_1, \gamma_2) = (0.5, 0.9, 0.99)$ when normalized to a maximal value of $1$. Right: The moduli of continuity for the value functions in comparison to the bounds given by \Cref{thr:hoelder-sums}, visualized by dashed lines in the same color. The bounds from \cite{bernsteinAdaptiveresolutionReinforcementLearning2010} are visualized using dotted lines (also the same color).}
   \vspace{-1em}
   \label{fig:example}
\end{figure}

\subsection{Sharpness} \label{sec:sharpness}

In the time-continuous setting and for linear rewards, we cannot improve the bound:

\begin{proposition}
   Let $D > 0$, $L > 1$, $\gamma \in (0,1)$ and suppose $r(x)=Ax$. Then there exists a continuous-time random system with LE-constant $L$ satisfying \Cref{def:rl-setting-lipschitz} such that $v$'s modulus is exactly $H$ from \Cref{thr:hoelder-integrals}.
\end{proposition}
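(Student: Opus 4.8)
The plan is to exploit a feature of \Cref{def:simplified-stochastic-flow} that the authors themselves flag: a random system is merely a measurable family of maps $\Phi_t$ and is \emph{not} required to satisfy any flow or semigroup identity $\Phi_{t+s} = \Phi_t \circ \Phi_s$. This gives complete freedom to prescribe each $\Phi_t$ separately, which is what makes an \emph{exact} construction feasible. First I would rewrite the target. Recognizing $K$ as the minimized $K_\delta$ from the proof of \Cref{thr:hoelder-integrals}, one has $K(d_0) = \ln(1/\gamma)\int_0^\infty \gamma^t \min\{L^t d_0, D\}\,dt$, so for the linear reward (with $R(d_0) = A d_0$) the bound becomes
\[
   H(d_0) = \frac{R(K(d_0))}{\ln(1/\gamma)} = A\int_0^\infty \gamma^t \min\{L^t d_0, D\}\,dt .
\]
Inspecting that proof shows every inequality there becomes an equality once $R$ is linear, \emph{provided} the system realizes $\E[d(\Phi_t(x),\Phi_t(x'))] = \min\{L^t d_0, D\}$ for the chosen pair; this single identity is what I must engineer.

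Concretely, I would take $S = [0,D]$ with the Euclidean metric (whose diameter is exactly $D$, and on which $r(x)=Ax$ is genuinely $A$-Lipschitz, so $R(d_0)=Ad_0$ is admissible; assume $A>0$ without loss of generality). Define the deterministic system
\[
   \Phi_t(x) = \min\bigl\{D,\ \max\{0,\ L^t(x - \tfrac{D}{2}) + \tfrac{D}{2}\}\bigr\},
\]
an expansion by factor $L^t$ about the midpoint $D/2$, truncated back into $[0,D]$. Since the affine part is $L^t$-Lipschitz and truncation onto an interval is $1$-Lipschitz, $\Phi_t$ is $L^t$-Lipschitz; being deterministic, it is LE-continuous with constant $L$, and the constant is sharp by the computation below. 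The value function is well defined because $r$ is bounded on $[0,D]$.

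For each $d_0 \in (0,D]$ consider the symmetric pair $x = \tfrac{D+d_0}{2}$, $x' = \tfrac{D-d_0}{2}$, which lies in $[0,D]$ with $d(x,x') = d_0$. Writing $\delta = \ln(D/d_0)/\ln L$, for $t \le \delta$ neither image is truncated and $\Phi_t(x) - \Phi_t(x') = L^t d_0$, while for $t > \delta$ the images are pinned to the endpoints $D$ and $0$, giving $\Phi_t(x) - \Phi_t(x') = D$; hence $\Phi_t(x) - \Phi_t(x') = \min\{L^t d_0, D\}$ for all $t$. Consequently
\[
   |v(x) - v(x')| = A\int_0^\infty \gamma^t \bigl(\Phi_t(x) - \Phi_t(x')\bigr)\,dt = H(d_0),
\]
so $W_v(d_0) \ge H(d_0)$. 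Combined with the upper bound $W_v(d_0) \le H(d_0)$ from \Cref{thr:hoelder-integrals}, this yields $W_v = H$ on $(0,D]$ (and trivially at $0$), as claimed.

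The conceptual crux is the observation in the first paragraph that no flow identity is imposed, which collapses the task to prescribing one clipped-linear family; everything afterward is routine verification. The only points needing care are the reward--metric compatibility (why the plain Euclidean metric on $[0,D]$, rather than a truncated metric on a larger space, is the right choice so that $R(d_0)=Ad_0$ stays valid) and confirming that the two Jensen steps in \Cref{thr:hoelder-integrals} are genuinely saturated here — which they are, since $R$ is linear and, for the chosen pair, $\E[d(\Phi_t(x),\Phi_t(x'))]$ equals its bound $\min\{L^t d_0, D\}$ pointwise in $t$.
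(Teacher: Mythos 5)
Your proof is correct and its core construction is essentially the paper's: a deterministic, clipped exponential expansion on $S=[0,D]$ with the Euclidean metric and linear reward, for which the pairwise separation at time $t$ is exactly $\min\{L^t d_0, D\}$. The paper uses $\Phi_t(x)=\min\{D,L^t x\}$ (expansion about $0$) and your map expands about the midpoint $D/2$; this changes nothing essential, since in both cases the extremal pair realizes the separation $\min\{L^t d_0,D\}$ pointwise in $t$. Where you genuinely diverge is in how you conclude $W_v=H$. The paper computes $v(x)=AK(x)/\ln(1/\gamma)$ explicitly for \emph{all} $x$, then shows $K'\geq 0$ and $K''\leq 0$ and uses concavity ($K(x)-K(x')\leq K(x-x')$) to locate the supremum defining $W_v$ at the pair $(d_0,0)$, i.e., it derives the modulus from scratch without re-invoking \Cref{thr:hoelder-integrals}. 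You instead exhibit one pair attaining $|v(x)-v(x')|=H(d_0)$, which gives $W_v(d_0)\geq H(d_0)$, and appeal to \Cref{thr:hoelder-integrals} for the reverse inequality. Your route is shorter and avoids the second-derivative computation; the paper's route is self-contained and yields the stronger structural fact that $v$ itself equals $AK/\ln(1/\gamma)$ up to the constant, which makes the saturation transparent. One small point to tighten in your version: \Cref{thr:hoelder-integrals} gives the pointwise bound $|v(x)-v(x')|\leq H(d(x,x'))$, and passing from this to $W_v(d_0)\leq H(d_0)$ (a supremum over all pairs with $d(x,x')\leq d_0$, not $=d_0$) requires that $H$ be monotonically increasing; this holds because $K'\geq 0$ and $R$ is monotone, but it is exactly the monotonicity fact the paper proves and you should state it rather than leave it implicit.
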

\begin{proof}
   Define a deterministic system on $S = [0,D]$ by $\Phi_t(x,\omega) = \min\{D, L^t x\}$ and let $d(x,x') = |x-x'|$. One may verify that $\Phi$ is indeed LE-continuous with constant $L$. 
   The value function takes the form
   \begin{align*}
      v(x) = A \big( \int^\delta_0 (\gamma L)^t x dt + \int_\delta^\infty \gamma^t D dt \big),
   \end{align*}
   where $\delta$ solves $L^\delta x = D$, i.e. $\delta = \ln(D/x)/\ln L$ for $x > 0$ and $\delta = \infty$ otherwise. This is exactly 
   $$
      v(x) = A K(x)/\ln(1/\gamma),
   $$ 
   where $K$ is from \Cref{thr:hoelder-integrals}, see \eqref{eq:K_delta} and \eqref{eq:K_delta_bound} with $K_\delta(d_0) = K(d_0)$ for $\delta = \ln(D/d_0)/\ln L$. 
   Let now $x, x' \in S$. Then
   \begin{align*}
      |v(x)-v(x')| = \frac{|A|}{\ln(1/\gamma)} |K(x)-K(x')|.
   \end{align*}
   Moreover, notice for $L \neq 1/\gamma$ and $D \geq x > 0$:
   \begin{align*}
      K'(x) &= \frac{\ln \gamma}{\ln(\gamma L)} (1-D^{1-\beta} x^{\beta-1}) \geq 0, \\
      K''(x) &= -\frac{\ln \gamma}{\ln (\gamma L)} D^{1-\beta} (\beta - 1) x^{\beta - 2} \leq 0,
   \end{align*}
   which is due to
   \begin{align*}
      \sign(\frac{\ln \gamma}{\ln(\gamma L)}) 
      = \sign(\beta - 1)
      = \sign(1-(\frac x D)^{\beta-1}),
   \end{align*} 
   for both $L < 1/\gamma$ and $L> 1/\gamma$.
   Thus, $K$ is monotonically increasing, while $K'$ is monotonically decreasing. Without loss of generality for $x \geq x'$,
   \begin{align*}
      |K(x)-K(x')| 
      &= K(x)-K(x') \\
      &= \textstyle\int_{x'}^x K'(y) dy \\
      &\leq \textstyle\int_{x'}^x K'(y-x') dy \\
      &= \textstyle\int_0^{x-x'} K'(y) dy \\
      &= K(x-x')-K(0) = K(x-x').
   \end{align*}
   Since the modulus is defined as the supremum over all $x,x'$ and since $K$ is monotonically increasing, the best way to maximize $K(x-x')$ subject to $0 \leq x-x' \leq d_0$ is to put $x = d_0$ and $x'= 0$:
   \begin{align*}
      W_v(d_0) 
      = \sup_{0 \leq x-x'\leq d_0}\frac{|A| K(x-x')}{\ln(1/\gamma)} = H(d_0),
   \end{align*}
   which completes the proof for $L \neq 1/\gamma$.
   For the case $L = 1/\gamma$, we have for $x > 0$:
   \begin{align*}
      K'(x) = \ln(D / x) \geq 0, \quad
      K''(x) = -1 / x \leq 0.
   \end{align*}
   The remainder of the argument is analogous.
\end{proof}

\section{Disturbance implies Differentiability}\label{sec:differentiability}

If one stays in the deterministic realm, then one has to live with systems that give rise to non-differentiable value functions. However: One can \emph{disturb} such systems by adding a little bit of Gaussian noise in every step. This per-step smoothing procedure suffices to obtain a differentiable value function; we illustrate this for the case $S = \R^n$: %

\begin{proposition}
   \label{thr:differentiability}
   Let $\Phi : \R^n \rightarrow \R^n$, $r : \R^n \rightarrow \R$ be a bounded and differentiable reward function, $\gamma \in (0,1)$ be a discount factor and $\sigma^2 > 0$ be a variance. Define a random system
   \begin{align*}
      \psi_{n+1}(x) = \psi_n(\Phi(x + \xi_n)), \quad \psi_0(x) = x
   \end{align*}
   with $\xi_0,\xi_1,\dots \sim \Nc(0,\sigma^2 I)$ iid and $I$ being the $n\times n$ identity matrix. Then the value function $w$ of $\psi$ 
   is differentiable, and its gradient is given for $\xi \sim \Nc(0,\sigma^2 I)$ by
   \begin{align}
      \label{eq:noise-gradient}
      \nabla w(x) = \nabla r(x) + \frac \gamma {\sigma^2} \E[w(\Phi(x+\xi))\cdot \xi].
   \end{align}
\end{proposition}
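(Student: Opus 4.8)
The plan is to first collapse the infinite-sum definition of $w$ into a Bellman-type fixed-point identity and then obtain differentiability by moving every derivative off $w$ and onto the smooth Gaussian kernel. Writing $g_n(y) = \E[r(\psi_n(y))]$, so that $w = \sum_{n\geq 0}\gamma^n g_n$, the key structural fact is that the ``fresh'' noise $\xi_n$ appearing in $\psi_{n+1}(x) = \psi_n(\Phi(x+\xi_n))$ is independent of the noises driving $\psi_n$. Conditioning on $\xi_n = \xi$ for $\xi \sim \Nc(0,\sigma^2 I)$ therefore gives $g_{n+1}(x) = \E_\xi[g_n(\Phi(x+\xi))]$. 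Peeling off the $n=0$ term ($g_0 = r$), re-indexing the discounted series, and interchanging sum and expectation (Tonelli, since $\sum_n \gamma^n\|g_n\|_\infty<\infty$) yields
\begin{align}
   w(x) = r(x) + \gamma\,\E_\xi[w(\Phi(x+\xi))].
   \label{eq:bellman-noise}
\end{align}
I would record at this stage that $w$ is measurable (a pointwise limit of measurable $g_n$) and bounded, $\|w\|_\infty \leq \|r\|_\infty/(1-\gamma)$; crucially, no regularity of $w$ or $\Phi$ beyond measurability is used here.

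Next I would rewrite the expectation in \eqref{eq:bellman-noise} as a convolution. Letting $p(u) = (2\pi\sigma^2)^{-n/2}\exp(-\|u\|^2/(2\sigma^2))$ and substituting $z = x+\xi$,
\begin{align*}
   \E_\xi[w(\Phi(x+\xi))] = \int_{\R^n} w(\Phi(z))\, p(z-x)\, dz,
\end{align*}
which concentrates the entire $x$-dependence in the smooth, rapidly decaying factor $p(z-x)$. Since $w\circ\Phi$ is bounded while $p(\cdot - x)$ and its gradient $\nabla_x p(z-x) = \sigma^{-2}(z-x)\,p(z-x)$ are, locally uniformly in $x$, dominated by a fixed integrable function of $z$, I can differentiate under the integral sign. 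This gives
\begin{align*}
   \nabla_x \E_\xi[w(\Phi(x+\xi))]
   = \int_{\R^n} w(\Phi(z))\, \frac{z-x}{\sigma^2}\, p(z-x)\, dz
   = \frac{1}{\sigma^2}\,\E_\xi[w(\Phi(x+\xi))\cdot\xi],
\end{align*}
reversing the substitution in the last step. Combining this with the assumed differentiability of $r$ and with \eqref{eq:bellman-noise} produces exactly \eqref{eq:noise-gradient}; moreover, since the convolution term is in fact $C^\infty$ in $x$, $w$ is as regular as $r$.

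The main obstacle is the rigorous justification of differentiation under the integral, and the point I would stress is that $w$ itself is never differentiated: its possible non-differentiability (as in \Cref{thr:logistic}) is harmless because the change of variables transfers all derivatives onto the Gaussian kernel, while boundedness of $w$ supplies the dominating function. Concretely, a mean-value-theorem bound on the difference quotients of $p(z-x)$ reduces the requirement to the integrability of $z \mapsto \sup_{|s|\leq 1}|z_i - x_i - s|\,p(z-x-se_i)$, which follows from the finiteness of Gaussian moments. The only other points to verify are the independence/conditioning argument underlying $g_{n+1}(x)=\E_\xi[g_n(\Phi(x+\xi))]$ and the Fubini interchange leading to \eqref{eq:bellman-noise}, both of which are immediate from the iid assumption on the $\xi_n$ and the uniform geometric bound $\gamma^n\|g_n\|_\infty \leq \gamma^n\|r\|_\infty$.
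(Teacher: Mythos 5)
Your proposal is correct and follows essentially the same route as the paper: derive the Bellman identity $w(x)=r(x)+\gamma\,\E[w(\Phi(x+\xi))]$ from the independence of the $\xi_n$, rewrite the expectation as a convolution with the Gaussian density, and shift the derivative onto the kernel so that $w$ itself is never differentiated. You merely supply more detail than the paper on two points it treats tersely --- the conditioning argument behind the fixed-point equation and the dominated-convergence justification for differentiating under the integral --- which is a welcome tightening rather than a different proof.
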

\begin{proof}
   Because of the independence of $\xi_0,\xi_1,\dots$, the new value function satisfies
   $$
      w(x) = r(x) + \gamma \E[w(\Phi(x+\xi))].
   $$
   Writing the expectation as a convolution,
   $$
      \E[w(\Phi(x+\xi))] = \E[w(\Phi(x-\xi))] = ((w\circ \Phi) * p)(x),
   $$
   where $p : \R^n\rightarrow \R$ is the density of $\Nc(0,\sigma^2 I)$. We can take the gradient with respect to $p$ \emph{instead} of $w \circ \Phi$,
   \begin{align*}
      \nabla ((w \circ \Phi) * p)(x) = ((w \circ \Phi) * \nabla p)(x).
   \end{align*}
   Using $\nabla p(\xi) = -\frac 1 {\sigma^2} p(\xi) \cdot \xi$,
   \begin{align*}
      \nabla \E[w(\Phi(x+\xi))] = \frac{1}{\sigma^2} \E[w(\Phi(x - \xi)) \cdot (-\xi)],
   \end{align*}
   which gives \eqref{eq:noise-gradient} after changing variables from $-\xi$ to $\xi$. The gradient $\nabla w(x)$ must also be finite, which follows from the fact that $\xi$ has finite variance and from the boundedness of $w$ (by $\sup_x |r(x)|/(1-\gamma)$, to be precise).
\end{proof}

In \Cref{fig:example-logistic}, we have extended the logistic map's domain from $[0,1]$ to $\R$ by projecting onto $[0,1]$ in every step. We then added Gaussian noise and obtained the value function. \Cref{thr:differentiability} shows that it must be differentiable.

\section{Conclusion}

We investigated the continuity of the value function in reinforcement learning and optimal control. In particular, we have shown that the value function can be nowhere differentiable, as suggested by \Cref{thr:logistic}, even if the reward function and the underlying system are both relatively well-behaved. Moreover, we have derived an upper bound to the value function's modulus of continuity on bounded state spaces in both discrete and continuous time and have shown sharpness of this bound for linear reward functions. As a consequence, we argued that the value function is Hölder continuous, in line with \cite{bardiOptimalControlViscosity1997,bernsteinAdaptiveresolutionReinforcementLearning2010}, which is useful for bounding variances. Finally, we have shown how to obtain differentiable value functions by introducing ``disturbances''.

\section*{Acknowledgements}
   
H.H. and S.P. acknowledge financial support from the project ``SAIL: SustAInable Life-cycle of Intelligent Socio-Technical Systems'' (Grant ID NW21-059D), which is funded by the program ``Netzwerke 2021'' of the Ministry of Culture and Science of the State of Northrhine Westphalia, Germany.

\appendix

\begin{lemma}
   \label{thr:moc-example}
   Consider the example from \Cref{sec:sharpness_and_example}. Then
   $
   W_v(d_0) = v(1) - v(1-d_0).
   $
\end{lemma}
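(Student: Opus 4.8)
The plan is to exploit two structural features of the value function $v(x)=\sum_{n=0}^\infty \gamma^n x^{(L^n)}$ on $[0,1]$, namely monotonicity and convexity. First I would note that the series is dominated by $\sum_{n=0}^\infty \gamma^n = 1/(1-\gamma)$, so it converges uniformly and $v$ is continuous on the compact interval $[0,1]$; hence the supremum in the definition of $W_v$ is attained. Since each summand $x\mapsto \gamma^n x^{(L^n)}$ is increasing on $[0,1]$, the function $v$ is increasing, so for $x\ge x'$ we have $|v(x)-v(x')| = v(x)-v(x')$. This reduces the evaluation of $W_v(d_0)$ to maximizing $v(x)-v(x')$ over all pairs with $0\le x-x'\le d_0$.

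Next I would establish convexity of $v$. Because $L>1$ gives $L^n\ge 1$ for every $n\in\N_0$, each power $x\mapsto x^{(L^n)}$ is convex on $[0,1]$. An infinite sum of convex functions with nonnegative coefficients $\gamma^n$ that converges pointwise is again convex, since the defining inequality passes termwise to the limit; therefore $v$ is convex on $[0,1]$.

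The crux is then a standard ``sliding interval'' property of convex functions: for a fixed step $h>0$, the increment $t\mapsto v(t+h)-v(t)$ is non-decreasing on $[0,1-h]$, because the difference quotient $\big(v(t+h)-v(t)\big)/h$ is non-decreasing in $t$ whenever $v$ is convex. Consequently, among all admissible pairs $(x',x'+h)$ with $x'+h\le 1$, the quantity $v(x'+h)-v(x')$ is largest when the interval abuts the right endpoint, i.e. when $x'+h=1$, giving the value $v(1)-v(1-h)$. Finally, since $v$ is increasing, the map $h\mapsto v(1)-v(1-h)$ is itself non-decreasing in $h$, so over $0\le h\le d_0$ its maximum occurs at $h=d_0$. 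Combining these observations yields $W_v(d_0)=v(1)-v(1-d_0)$.

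I expect the main obstacle to be the convexity-based monotonicity of the increments: I would either record it as a one-line lemma or cite the standard monotone-difference-quotient characterization of convex functions, taking a little care that it is precisely the domain constraint $x'+h\le 1$ (forcing both arguments into $[0,1]$) that pins the optimal interval to the right endpoint $x=1$ rather than letting it slide further right.
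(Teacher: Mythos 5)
Your argument is correct and follows essentially the same route as the paper: both establish that $v$ is increasing and convex (the paper via $v''\ge 0$ termwise, you via convexity of each summand), then use convexity to slide the interval $[x',x]$ to the right endpoint and monotonicity to take the step size equal to $d_0$. Your difference-quotient formulation of the sliding step is just a derivative-free rephrasing of the paper's integral comparison $\int_{x'}^{x} v'(y)\,dy \le \int_{x'}^{x} v'(y+(1-x))\,dy$.
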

\begin{proof}
   Note that $v$ is monotonically increasing, which follows from the non-negativity of $v'(x)$. Thus,
   \begin{align*}
      W_v(d_0) = \sup_{d_0 \geq x - x' \geq 0} \{ v(x) - v(x')\}.
   \end{align*}
   Moreover, $v'(x)$ is monotonically increasing, since 
   \begin{align*}
      v''(x) = \sum_{n=0}^\infty (\gamma L)^n (L^n-1) x^{L^n-2} \geq 0.
   \end{align*}
   Hence, for $1 \geq x \geq x' \geq 0$,
   \begin{align*}
      v(x) - v(x') 
      &= \textstyle\int_{x'}^x v'(y) dy \\
      &\leq \textstyle\int_{x'}^x v'(y+(1-x)) dy \\
      &= \textstyle\int_{x'+(1-x)}^1 v'(y) dy \\
      &= v(1) - v(x' + (1 - x)) \\
      &= v(1) - v(1 - (x-x')).
   \end{align*}
   Introducing $\delta = x-x'$, the modulus becomes 
   $$
   W_v(d_0) = v(1) - \inf_{d_0 \geq \delta \geq 0} v(1-\delta).
   $$
   Since $v$ is monotonically increasing, the $\delta$ that minimizes $v(1-\delta)$ should be as large as possible. Since it is bounded by $d_0$, the optimal choice becomes $\delta = d_0$.
\end{proof}

\begin{corollary}[Case $<$]
   \label{thr:continuity-value-function-case-less}
   Suppose \Cref{def:rl-setting-lipschitz} holds with $L < 1/\gamma$ and $R(d_0)=C d_0$. Then \eqref{eq:continuity-value-function-hoelder} is satisfied for $\beta = 1$ and 
   \begin{align*}
      A = C
      \cdot \begin{cases}
         \frac 1 {\ln (1/(\gamma L))} & T = \R_{\geq 0}, \\
         \frac 1 {1-\gamma L} & T = \N_{0}.
      \end{cases}
   \end{align*}
\end{corollary}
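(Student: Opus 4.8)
The plan is to bound $|v(x)-v(x')|$ directly, treating both time domains in parallel, since the argument is identical up to replacing the integral over $\R_{\geq 0}$ with a sum over $\N_0$. The reason this case requires a separate argument is that \Cref{thr:hoelder-integrals,thr:hoelder-sums} assume $L > 1$, whereas the regime $L < 1/\gamma$ permits $L \leq 1$; fortunately, the linearity of $R$ makes a direct computation much cleaner than the general case, since we will not even need the two applications of Jensen's inequality from the proof of \Cref{thr:hoelder-integrals}.

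First I would move the difference inside the integral (resp. sum) and apply the triangle inequality together with the reward bound, writing
\begin{align*}
   |v(x)-v(x')|
   \leq \int_0^\infty \gamma^t \E[R(d(\Phi_t(x),\Phi_t(x')))]\, dt
\end{align*}
in the continuous case, and the analogous bound with $\sum_{n=0}^\infty \gamma^n$ in the discrete case. Next, I would exploit that $R(d_0) = C d_0$ is \emph{linear}, so $R$ commutes with both the expectation and the scalar $C$, and then invoke LE-continuity to replace $\E[d(\Phi_t(x),\Phi_t(x'))]$ by $L^t d(x,x')$ (resp.\ $L^n d(x,x')$). This reduces the bound to
\begin{align*}
   |v(x)-v(x')| \leq C\, d(x,x') \int_0^\infty (\gamma L)^t\, dt
\end{align*}
and to $C\, d(x,x') \sum_{n=0}^\infty (\gamma L)^n$ in the discrete case.

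Finally I would evaluate these elementary integrals/series, using the hypothesis $L < 1/\gamma$, which is exactly the condition $\gamma L < 1$ that guarantees convergence. For the continuous case, $\int_0^\infty (\gamma L)^t\, dt = 1/\ln(1/(\gamma L))$ since $\ln(\gamma L) < 0$; for the discrete case, the geometric series sums to $1/(1-\gamma L)$. These give precisely the two stated values of $A$ with $\beta = 1$, establishing \eqref{eq:continuity-value-function-hoelder}. There is no real obstacle here: the only point worth emphasizing is that the whole argument hinges on $\gamma L < 1$ for integrability, which is why this is singled out as the straightforward Lipschitz case rather than derived from the Hölder machinery.
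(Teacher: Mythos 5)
Your proposal is correct and follows essentially the same route as the paper: the paper's own proof is just the terse version of your chain (triangle inequality, the bound $|r(y)-r(y')|\leq C\,d(y,y')$, LE-continuity, and the elementary geometric integral/series under $\gamma L<1$). The extra detail you supply, including the remark on why this case cannot be delegated to \Cref{thr:hoelder-integrals,thr:hoelder-sums} when $L\leq 1$, is accurate but does not change the argument.
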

\begin{proof}
   Set $d_0 = d(x,x')$. For $T = \N_0$, %
   \begin{align*}
      |v(x)-v(x')| 
      &\leq \sum_{n=0}^\infty \gamma^n C L^n d_0 = \frac{C}{1-\gamma L} d_0.
   \end{align*}
   Similarly, for $ T = \R_{\geq 0}$, 
   \begin{align*}
      |v(x)-v(x')| 
      &\leq \int_{0}^\infty \gamma^t C L^t d_0 dt = \frac{C}{\ln(1/(\gamma L))} d_0.
   \end{align*}
\end{proof}

\begin{corollary}[Case $=$]
   \label{thr:continuity-value-function-case-eq}
   Suppose \Cref{def:rl-setting-lipschitz} holds with $L = 1/\gamma$ and $R(d_0)=C d_0$. For $\beta \in (0,1)$, \eqref{eq:continuity-value-function-hoelder} is satisfied by
   \begin{align*}
      A = C \big( \textstyle\frac{1}{e(1-\beta)} + D^{1-\beta}(\ln D + 1) \big)
      \cdot \begin{cases}
         \frac 1 {\ln (1/\gamma)} & T = \R_{\geq 0}, \\
         \frac 1 {1-\gamma} & T = \N_{0}.
      \end{cases}
   \end{align*}
\end{corollary}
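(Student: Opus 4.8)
The plan is to read off the bound from \Cref{thr:hoelder-integrals} and \Cref{thr:hoelder-sums} in the special case $L = 1/\gamma$ and then reduce the whole statement to a single one-variable inequality. When $L = 1/\gamma$, the relevant object from \Cref{thr:hoelder-integrals} is $K(d_0) = (\ln(D/d_0)+1)d_0$, and with the linear reward $R(d_0) = C d_0$ we get $H(d_0) = C K(d_0)/\ln(1/\gamma)$. For $T = \R_{\geq 0}$, \Cref{thr:hoelder-integrals} gives $|v(x)-v(x')| \le C K(d_0)/\ln(1/\gamma)$ with $d_0 = d(x,x')$, while for $T = \N_0$ the extra factor $\ln(1/\gamma)/(1-\gamma)$ from \Cref{thr:hoelder-sums} turns this into $C K(d_0)/(1-\gamma)$. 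In both cases the bound equals $c_T\, C\, K(d_0)$, where $c_T = 1/\ln(1/\gamma)$ or $c_T = 1/(1-\gamma)$ precisely matches the case split appearing in the stated constant $A$. It therefore suffices to prove $K(d_0) \le A' d_0^\beta$ for all $0 < d_0 \le D$, where $A' = \frac{1}{e(1-\beta)} + D^{1-\beta}(\ln D + 1)$; the endpoint $d_0 = 0$ is trivial since $K(0) = 0$.

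For the core inequality I would divide by $d_0^\beta$ and split the logarithm: $K(d_0)/d_0^\beta = (\ln(D/d_0)+1)\,d_0^{1-\beta} = (\ln D + 1)\,d_0^{1-\beta} + (-\ln d_0)\,d_0^{1-\beta}$. The first summand is controlled by monotonicity, since $1-\beta > 0$ and $d_0 \le D$ give $d_0^{1-\beta} \le D^{1-\beta}$, hence $(\ln D+1)d_0^{1-\beta} \le (\ln D+1)D^{1-\beta}$. The second summand is handled by a single-variable optimization: the map $t \mapsto -t^{1-\beta}\ln t$ on $(0,\infty)$ has derivative $-t^{-\beta}\big((1-\beta)\ln t + 1\big)$, so it is maximized at $t^{\ast} = e^{-1/(1-\beta)}$ with maximal value $\frac{1}{e(1-\beta)}$. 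Adding the two bounds reproduces exactly $A'$, and multiplying through by $c_T C$ yields the claimed $A$ in both the continuous- and discrete-time cases.

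The main subtlety — the step I would be most careful about — is the bound on the first summand, which implicitly uses $\ln D + 1 \ge 0$; in the small-diameter regime $D < 1/e$ this monotonicity step must be adjusted (for instance by bounding that term by $0$ and absorbing it), noting that $A'$ remains positive there since $D^{1-\beta}\ln D \to 0$ as $D \to 0^{+}$. The genuinely load-bearing computation is the optimization $\sup_{t>0}\big(-t^{1-\beta}\ln t\big) = \frac{1}{e(1-\beta)}$: this is exactly where the factor $\frac{1}{e(1-\beta)}$ in $A$ originates, and it also explains why $A$ diverges as $\beta \to 1$, reflecting that at the boundary $L = 1/\gamma$ the value function is Hölder continuous for every exponent below $1$ but not Lipschitz.
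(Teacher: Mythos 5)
Your proof is essentially the paper's own, step for step: the same splitting of $K(d_0)=(\ln(D/d_0)+1)d_0$ into $d_0\ln(1/d_0)+(\ln D+1)d_0$, the same bound $d_0^{1-\beta}\le D^{1-\beta}$ on the second piece, and the same single-variable maximization of $d_0^{1-\beta}\ln(1/d_0)$ at $d_0^*=e^{-1/(1-\beta)}$ producing the factor $\frac{1}{e(1-\beta)}$; the discrete case is likewise handled by routing through \Cref{thr:hoelder-sums}. The one place you diverge is the caveat you raise for $D<1/e$, and you are right to raise it: the paper applies $(\ln D+1)d_0\le(\ln D+1)D^{1-\beta}d_0^{\beta}$ without comment, and this inequality reverses when $\ln D+1<0$. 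Be aware, though, that your proposed repair (dropping the negative term) only yields the H\"older bound with the larger constant $C/(e(1-\beta))$ times the case factor, not the stated $A$; in fact the stated $A$ cannot be recovered in that regime, since the sharpness construction of \Cref{sec:sharpness} with, e.g., $C=1$, $D=e^{-2}$, $\beta=1/2$, $d_0=e^{-3}$ gives $W_v(d_0)=2e^{-3}/\ln(1/\gamma)>e^{-5/2}/\ln(1/\gamma)=A\,d_0^{\beta}$. So for $D\ge 1/e$ your argument matches the paper's and is complete; for $D<1/e$ the corollary itself only holds with the weaker constant your patched argument actually proves.
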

\begin{proof}
      For $T = \R_{\geq 0}$, rewrite $H$ from \Cref{thr:hoelder-integrals} as 
      \begin{equation}
         \label{eq:hc-pf}
         H(d_0) 
         = \frac{C}{\ln(1/\gamma)} d_0 \ln(1/d_0) + \frac{C(\ln D + 1)}{\ln (1/\gamma)} d_0.
      \end{equation}
      In the last term and for any $\beta \in (0,1)$, we can bound $d_0$ by $d_0 = d_0^{1-\beta}d_0^\beta\leq D^{1-\beta} d_0^\beta$. The challenging term is $d_0 \ln(1/d_0)$, for which we would like to find a $B \geq 0$ such that $B d_0^\beta \geq d_0 \ln(1/d_0)$ for every $d_0 \in (0,\infty)$. 
      Divide by $d_0^\beta$ to bound $B$ from below by
      $$
         h(d_0):= d_0^{1-\beta} \ln(1/d_0).
      $$ 
      Notice that
      \begin{align*}
         h'(d_0) = ((1- \beta) \ln (1/d_0) - 1) / d_0^{\beta},
      \end{align*}
      with
      \begin{align*}
         h'(d_0) \geq 0 
         &\iff -\ln d_0 \geq 1/(1-\beta) \\
         &\iff d_0 \leq e^{-1 /(1-\beta)} 
      \end{align*}
      and $h'(d_0^*) = 0 $ for $d_0^* = e^{-1/(1-\beta)}$. Indeed, $h'(d_0) \geq 0$ for $d_0 \in (0,d_0^*]$ and $h'(d_0) \leq 0$ for $d_0 \in [d_0^*, \infty)$. Hence, $d_0^*$ is a global maximum of $h$, and thus 
      \begin{align*}
         B := \frac 1 {e (1-\beta)} = h(d_0^*) \geq h(d_0) \quad \forall d_0 \in (0,\infty).
      \end{align*}
      In \eqref{eq:hc-pf}, we can thus bound $d_0 \ln (1/d_0)$ by $B d_0^\beta$ and $d_0$ by $D^{1-\beta} d_0^\beta$. After factoring out $d_0^\beta$, we obtain 
      $$
         H(d_0) \leq \frac{C}{\ln(1/\gamma)}\big( \frac{1}{e (1-\beta)} + D^{1-\beta}(\ln D + 1) \big) d_0^\beta,
      $$
      which shows the case for $T = \R_{\geq 0}$.
      For $T = \N_0$, use \Cref{thr:hoelder-sums} and then bound $H$ in the same way.
\end{proof}

\begin{corollary}[Case $>$]
   \label{thr:continuity-value-function-case-greater}
   Suppose \Cref{def:rl-setting-lipschitz} holds with $L > 1/\gamma$ and $R(d_0)= C d_0$. Then \eqref{eq:continuity-value-function-hoelder} is satisfied when setting $\beta = \ln(1/\gamma) / \ln L$ and 
   \begin{align*}
      A = \frac{C D^{1-\beta}\ln L}{\ln (\gamma L)} \cdot \begin{cases}
         \frac 1 {\ln (1/\gamma)} & T = \R_{\geq 0}, \\
         \frac 1 { 1-\gamma } & T = \N_{0}.
      \end{cases}
   \end{align*}
\end{corollary}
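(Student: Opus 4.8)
The plan is to obtain the claim by specializing \Cref{thr:hoelder-integrals,thr:hoelder-sums} to the linear reward $R(d_0)=Cd_0$ and then discarding a negative term. With this reward, $R(K(d_0))=C\,K(d_0)$, so the continuous-time bound of \Cref{thr:hoelder-integrals} reads $H(d_0)=\frac{C}{\ln(1/\gamma)}K(d_0)$, where, since $L>1/\gamma$ places us in the branch $L\neq 1/\gamma$,
\begin{align*}
   K(d_0)=\frac{\ln L}{\ln(\gamma L)}D^{1-\beta}d_0^\beta+\frac{\ln\gamma}{\ln(\gamma L)}d_0,\quad \beta=\frac{\ln(1/\gamma)}{\ln L}.
\end{align*}

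First I would pin down the signs forced by $L>1/\gamma$. From $L>1/\gamma>1$ we get $\gamma L>1$, hence $\ln(\gamma L)>0$; together with $\ln\gamma<0$ and $\ln L>0$ this makes the first coefficient $\ln L/\ln(\gamma L)$ positive, the second coefficient $\ln\gamma/\ln(\gamma L)$ negative, and $0<\beta<1$. The one substantive observation is then immediate: the linear term in $K$ has a negative coefficient and may be dropped to leave a pure power law,
\begin{align*}
   K(d_0)\leq\frac{\ln L}{\ln(\gamma L)}D^{1-\beta}d_0^\beta,\qquad d_0\geq 0.
\end{align*}

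Feeding this into $H$ gives $H(d_0)\leq\frac{C D^{1-\beta}\ln L}{\ln(\gamma L)\,\ln(1/\gamma)}d_0^\beta$, which is exactly \eqref{eq:continuity-value-function-hoelder} with the stated exponent and with $A=\frac{CD^{1-\beta}\ln L}{\ln(\gamma L)}\cdot\frac{1}{\ln(1/\gamma)}$, settling the case $T=\R_{\geq 0}$. For $T=\N_0$ I would compose with \Cref{thr:hoelder-sums}, which multiplies the bound by $\ln(1/\gamma)/(1-\gamma)$; the factor $\ln(1/\gamma)$ cancels against the denominator, yielding $A=\frac{CD^{1-\beta}\ln L}{\ln(\gamma L)}\cdot\frac{1}{1-\gamma}$ as claimed.

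There is essentially no obstacle here beyond correct bookkeeping: the whole argument is a specialization of the already-proved moduli bounds, and the only point requiring care is the sign check $\ln(\gamma L)>0$, $\ln\gamma<0$ under $L>1/\gamma$, which is precisely what legitimizes discarding the linear term rather than having to bound it against $d_0^\beta$.
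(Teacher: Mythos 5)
Your proposal is correct and follows exactly the paper's own argument: specialize \Cref{thr:hoelder-integrals} to $R(d_0)=Cd_0$, use the sign check $\ln(\gamma L)>0$, $\ln\gamma<0$ to drop the non-positive linear term in $K(d_0)$, and compose with \Cref{thr:hoelder-sums} for $T=\N_0$. No substantive differences.
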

\begin{proof}
   For $T = \R_{\geq 0}$, apply \Cref{thr:hoelder-integrals} and notice that the term $(\ln \gamma / \ln(\gamma L)) d_0$
   in $K(d_0)$ is non-positive. Relaxing the bound by removing this term gives $A$. %
   Additionally, note that $\beta$ is strictly between 0 and 1 since $L > 1/\gamma > 1$ implies $\ln L > \ln(1/\gamma) > 0$. For $T = \N_0$, we use the same argument but factor in $\ln(1/\gamma)/ (1-\gamma)$ using \Cref{thr:hoelder-sums}.
\end{proof}

\bibliographystyle{plain}
\bibliography{CDC24}

\end{document}